\newtheorem{defi}{Definition}
\newtheorem{thm}{Theorem}
\def\h{{\bf h}}
\def\g{{\bf g}}
\def\dDelta{{\bf \Delta}}
\def\R{{\bf R}}
\begin{document}



\title{Performance Optimization of \\Network Coding Based Communication and Reliable Storage in Internet of Things}

\author{Jian~Li, Yun~Liu, Zhenjiang~Zhang, Jian~Ren, and Nan~Zhao
\IEEEcompsocitemizethanks{
\IEEEcompsocthanksitem  Jian~Li, Yun~Liu, Zhenjiang~Zhang are with the School of Electronic and Information Engineering, Beijing Jiaotong University, Beijing 100044, China. Email: \{lijian,liuyun,zhjzhang1\}@bjtu.edu.cn
\IEEEcompsocthanksitem Jian~Ren is with the Department of Electrical \& Computer Engineering, Michigan State University, East Lansing, MI 48824, United States. Email: renjian@egr.msu.edu
\IEEEcompsocthanksitem Nan~Zhao is with the School of Telecommunications Engineering, Xidian University, Xi'an, Shanxi 710071, China. Email: zhaonan@xidian.edu.cn
}}


\IEEEtitleabstractindextext{%
\begin{abstract}
Internet or things (IoT) is changing our daily life rapidly. Although new technologies are emerging everyday and expanding their influence in this rapidly growing area, many classic theories can still find their places. In this paper, we study the important applications of the classic network coding theory in two important components of Internet of things, including the IoT core network, where data is sensed and transmitted, and the distributed cloud storage, where the data generated by the IoT core network is stored. First we propose an adaptive network coding (ANC) scheme in the IoT core network to improve the transmission efficiency. We demonstrate the efficacy of the scheme and the performance advantage over existing schemes through simulations. 
Next we introduce the optimal storage allocation problem in the network coding based distributed cloud storage, which aims at searching for the most reliable allocation that distributes the $n$ data components into $N$ data centers, given the failure probability $p$ of each data center. Then we propose a polynomial-time optimal storage allocation (OSA) scheme to solve the problem. Both the theoretical analysis and the simulation results show that the storage reliability could be greatly improved by the OSA scheme.
\end{abstract}

\begin{IEEEkeywords}
Internet of things, wireless sensor networks, distributed cloud storage.
\end{IEEEkeywords}}

\maketitle

\IEEEdisplaynontitleabstractindextext

\IEEEpeerreviewmaketitle

\IEEEraisesectionheading{\section{Introduction}\label{sec:introduction}}


\IEEEPARstart{I}{nternet} of things (IoT)~\cite{fangyg} is an integral part in today's development of smart city. People could remotely access and interact with a wide range of devices integrated with sensors, from home appliances, wearable electronics to environmental monitors. Many new applications of smart city rely on the deployment of Internet of things, such as home automation, remote healthcare, intelligent transportation, smart grid and so on. The high level view of Internet of things is shown in Fig.~\ref{fig:nc-smartcity},  which includes IoT core network for data sensing and transmission, distributed cloud storage~\cite{info14} for storing the data generated by the core network, cloud computing~\cite{cloudcomp} for processing the data. Upon these components are various applications such as e-transportation, e-heath, smart home and so on. Communication networks such as 4G and 5G networks~\cite{fiveg} interconnect these major components.

\begin{figure}[h]
\centering
\includegraphics[width=.95\columnwidth]{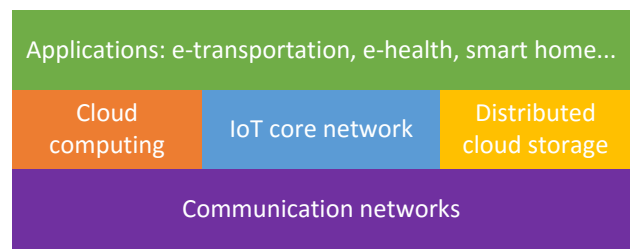}
\caption{High level view of Internet of things}
\label{fig:nc-smartcity}
\end{figure}

The IoT core network is responsible for generating the data for Internet of things. Smart devices sense various data and send out the data through the networks constituted by these devices~\cite{zigbee,sixlowpan}. Since the communication of the IoT core network is largely through wireless, the packet loss may be high due to fadings and interferences. Thanks to the emerging of software defined wireless networking~\cite{sdwn}, we can apply sophisticated algorithms to improve the communication quality of the IoT core network.

During the operation of Internet of things, data collected from a vast number of sensors in the IoT core network could explode. The distributed cloud storage is the best candidate to safely and reliably store these data.  The distributed data storage architecture model distributes the database to multiple servers in many locations across the participating network in the storage cloud. Each location is directly and independently plugged into the Internet. If something unexpected happens to the data in one location, generally only a small amount of backed up data is impacted and the original data could be recovered using the data stored in rest of the locations. In addition, since the data stored is distributed to many locations, users can access the data simultaneously from several locations to efficiently make use of the bandwidth. 

Network coding provides a trade-off between communication capacity and computational complexity in network environment by enabling the intermediate relay nodes to encode the incoming packets before forwarding them. The throughput and robustness of the network can be improved through network coding. In this paper, we study two applications of the classic network coding theory in Internet of things. The main contribution of this paper is:
\begin{itemize}
\item We propose an adaptive network coding scheme (ANC scheme) for the IoT core network and demonstrate that the scheme can improve the transmission efficiency and the performance is better than existing schemes.
\item For the distributed cloud storage utilizing network coding that stores the data generated by the IoT core network, we introduce the optimal storage allocation problem and propose an optimal storage allocation (OSA) scheme. Simulation results show that the storage reliability can be greatly improved.
\end{itemize}

The paper is organized as follows:  in Section~\ref{sec:relatedwork} we briefly review the IoT core network and the distributed cloud storage. The concept of network coding and its advantages in communication and storage are also introduced in Section~\ref{sec:relatedwork}. Next we propose and analyze our adaptive network coding scheme for the IoT core network in Section~\ref{sec:anc}. After that we study the optimal storage allocation problem in the distributed cloud storage utilizing network coding in Section~\ref{sec:osa}. At last is the Conclusion.


\section{Preliminaries and Related Work}\label{sec:relatedwork}
\subsection{Internet of Things}
The objective of Internet of things is to equip everything related to human beings with smart chips integrating sensors, actuators and transceivers. Anything equipped with the smart chip can be called a smart device. Smart devices within a certain range can communicate with each other and form networks with different purposes, such as smart home appliance networks, smart surveillance camera networks, etc. These networks can be further connected to the Internet through proper interconnecting. The benefit of the deployment of Internet of things is obvious. Below are some of the application cases: home owners could remotely monitor and control their home appliance; city residents could check current air pollution levels of any streets; transportation department could make quick actions according to real-time traffic monitoring. Internet of things is the fundamental building block of smart city. 
In this paper, we will mainly focus on the IoT core network and the distributed cloud storage which stores the data generated by the IoT core network as shown in Fig.~\ref{fig:nc-smartcity}. 

\subsubsection{IoT Core Network}
IoT core network consists of the smart devices mentioned above and the networks among these devices. Several challenges of the IoT core network exists.
\begin{itemize}
\item The first challenge is that it lacks of a unified infrastructure and protocol stack. Different academical research groups, industrial R$\&$D teams and standard organizations have proposed different solutions to integrate the smart devices. As a result, smart devices from different vendors cannot communicate with each other, or can communicate only after some complicated bridging work. This has created an obstacle for the realization of Internet of things. As an example, Zigbee~\cite{zigbee} and 6LoWPAN~\cite{sixlowpan} are two popular protocol stacks based on 802.15.4 physical layer and both have been widely adopted in the IoT core network, but they are not compatible with each other. 
\item The second challenge is that the monitor and control of the network lacks of flexibility. It is difficult for the network operator to update network management policies.
\item The third challenge is that the functionality of the network cannot be changed without reprogramming the smart devices when the application environment changes.
\end{itemize}

To overcome these shortcomings, software defined wireless networking (SDWN)~\cite{sdwn} was proposed based on the paradigm of software defined networking (SDN)~\cite{sdn}. The major difference between SDN and SDWN is that in the context of SDWN, the network elements in the data plane are smart devices instead of switches. The smart devices act as both end users and switches. The data flow is separated from the control flow. We can easily change the network behaviors through exchange of the control flow among smart devices. Take the wireless sensor networks in the IoT core network as an example, the layers above the 802.15.4 MAC layer will be defined through software and can be changed instantly to meet the new requirements. To update the network management policies or to change network functionality will be as easy as installing a new software application.

Thanks to the advantages of SDWN, it is much easier to implement algorithms which can improve network performance into IoT core network. In~\cite{sbmedard} the authors propose to combine network coding and software defined networking, where the code rate of the network coding is fixed. Although this approach can improve the communication throughput, the strategy is not flexible to cope with the changing channel qualities in wireless environments. In this paper, we will show that the transmission efficiency of the IoT core network can be greatly improved through our adaptive network coding scheme, where the code rate of the network coding can be dynamically adjusted in a centralized manner with the global view of the whole network.

\subsubsection{Distributed Cloud Storage}
According to the estimation of UNECE, in 2015 the amount of all global data is about 7 zettabytes ($7 \times 10^{21}$). The volume of data will be boosted dramatically with the developing of Internet of things, where there will be hundreds of thousands of sensors deployed to create more and more data, including the air pollution levels of every street of the smart city, health condition of every elder in the smart city, videos captured by surveillance cameras at every corner of the smart city, etc. To the year 2020, the amount of data would grow to 40 zettabytes. How to properly store the data has become a major challenge in Internet of things. The data center will be the backbone for Internet of things and must fulfill the following requirements:

\begin{itemize}
\item The file stored in the data center must be reliable. If the file become unavailable because of hardware failures, the data center should be able to recover the file as soon as possible.
\item The data storage efficiency should be high. This means that the data center could store more data with the same storage devices. This requirement is essential because of the astonishing high volume of the data produced by the IoT core network.
\item The data center should be able to provide confidentiality for the files stored. Some of the files such as medical profiles are related to personal privacy and should not be accessed by unauthorized individuals.
\item The data storage capacity must be scalable. If the data to be stored has exceed the system limit, the data center should be easily updatable to meet the new requirement.
\end{itemize}

Traditional centralized data center is not suitable for the context of Internet of things. If something unexpected happens such as power outage or military actions, the precious data stored in the data center could be lost and unrecoverable. To ensure a high reliability of the data storage, a typical solution is to store the data across multiple servers in the distributed cloud storage. The main idea is that instead of storing the entire data in one server, we can split the data into $n$ data components and store the components separately. The original data can be recovered only when the required (threshold) number of components, say $k$, are collected. The storage efficiency is much higher than simply replicating the data over multiple servers. The original data is information theoretically secure for anyone who can access either an individual component or multiple components when the number of components combined is less than the threshold $k$. In this case, when the individual components are stored distributively across multiple cloud storage servers, each cloud storage server only needs to assure data integrity and data availability. 
The requirement for costly data encryption and secure key management might be eased.
The distributed cloud storage can also increase data availability while reducing network congestion, thus leading to increased resiliency. A popular approach is to employ an $(n, k)$ maximum distance separable (MDS) code, such as the Reed-Solomon (RS) code in the Total Recall system~\cite{Total}.

From the analysis above we can see that the distributed cloud storage is essential in the development of Internet of things since it meets all the requirements of the data center in Internet of things. In later sections, we will show that by applying network coding in the distributed cloud storage we can further improve the performance of data storage.

\subsection{Network Coding}
In this section, we will briefly introduce the concept of network coding and its advantages in improving the communication throughput and the distributed cloud storage performance. Network coding has shown its benefits in traditional communication/storage networks and can be further applied in Internet of things. Network coding was first introduced in the seminal paper by Ahlswede \emph{et al.} in~\cite{Ahlswede}. By allowing the intermediate relay nodes to encode the incoming packets, the network could achieve the maximum multicast capacity.

A network is equivalent to a directed graph $G=(V,E)$, where $V$ represents the set of vertices corresponding to the network nodes and $E$ represents all the directed edges between vertices corresponding to the communication link. The start vertex $v$ of an edge $e$ is called the tail of $e$ and written as $v=tail(e)$, while the end vertex $u$ of an edge $e$ is called the head of of $e$ and written as $u= head(e)$. For a source node $u$, there is a set of symbols $\mathcal{X}(u) = (x_1,\dots,x_k)$ to be sent. Each of the symbol is from the finite field $GF(2^m)$, where $m$ is a positive integer. For a link $e$ between intermediate nodes $r_1$ and $r_2$, written as $e=(r_1,r_2)$, the symbol $y_e$ transmitted on it is the function of all the $y_{e'}$ such that $head(e')=r_1$. And $y_e$ can be written as:
\begin{equation}
y_e = \sum_{e':head(e')=r_1}\beta_{e',e} \cdot y_{e'},
\end{equation}
in which the encoding coefficients $\beta_{e',e}\in GF(2^m)$. For a sink node $v$, there is a set of incoming symbols $y_{e'}~(e':tail(e')=v)$ to be decoded. 

\subsubsection{Network Coding in Communication}
The main idea of network coding can be illustrated through Fig.~\ref{fig:nc-example}. Assume the capacity of all the edges is $C$, the capacity of this network is $2C$ according to the max-flow min-cut theorem. Only by encoding the incoming packet symbols $x_1,x_2$ at node R3, this network can achieve the maximum capacity.

In~\cite{Koetter,Ho} the authors have shown that linear codes with random selected coefficients are sufficient to achieve the multicast capacity by coding on a large enough field. Sink nodes that have received more linear independent encoded symbols than the original symbol generated by the source nodes can easily decode the original symbols by solving a set of linear equations. Moreover, it has been demonstrated that network coding can improve the communication throughput. As an example, the authors in~\cite{Gkantsidis} have applied the principles of random network coding to the context of peer-to-peer (P2P) content distribution, and have shown that file downloading times can be reduced. Thus in this paper we propose to apply adaptive random linear network coding in the IoT core network to improve the network transmission efficiency.

\begin{figure}
\centering
\includegraphics[width=1.0\columnwidth]{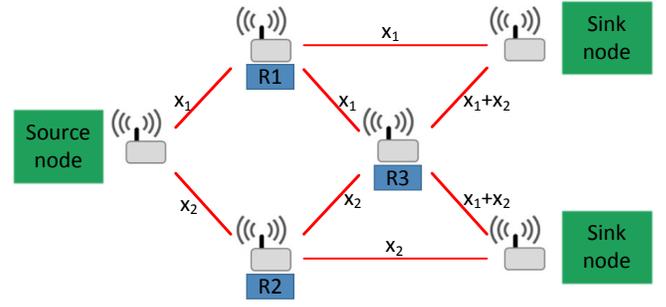}
\caption{A simple example of network coding}
\label{fig:nc-example}
\end{figure}

\subsubsection{Distributed Cloud Storage Utilizing Network Coding}
When a storage node in the distributed cloud storage network that employing $(n, k)$ RS code (such as Total Recall~\cite{Total}) fails, the replacement node connects to $k$ nodes and downloads the data of the same amount as the whole file first to decode the original file. Then the replacement node encodes the original file using the same $(n, k)$ code to recover the encoded part of the file stored in the failed node. This approach is a waste of bandwidth because the whole file has to be downloaded to recover a fraction of it.

To overcome this drawback, Dimakis \emph{et al.}~\cite{Dimakis} introduced the conception of $(n,k,d,\alpha,\beta,B)$ regenerating code based on the network coding. In the context of regenerating code, the contents stored in a failed node can be regenerated by the replacement node through downloading $\beta$ help symbols from each of $d$ helper nodes. This regeneration is identical to the encoding process of the intermediate nodes in network coding. The bandwidth consumption for the failed node regeneration could be far less than the whole file. A data collector (DC) can reconstruct the original file stored in the network by downloading $\alpha$ symbols from each of the $k$ storage nodes. In~\cite{Dimakis}, the following theoretical bound was derived based on network coding theory:
\begin{equation}
\label{eq:min_cut}
B \leq \sum_{i=0}^{k-1}\min \{ \alpha, (d-i)\beta \}.
\end{equation}
From equation~(\ref{eq:min_cut}), a tradeoff between the regeneration bandwidth $\gamma = d\beta$ and the storage requirement $\alpha$ was derived. $\gamma$ and $\alpha$ cannot be decreased at the same time. There are two special cases: minimum storage regeneration (MSR) point in which the storage parameter $\alpha$ is minimized:
\begin{equation}
\label{eq:MSR_tradeoff}
(\alpha_{MSR},\gamma_{MSR})= \left(\frac Bk, \frac{Bd}{k(d-k+1)}\right),
\end{equation}
and minimum bandwidth regeneration (MBR) point in which the bandwidth $\gamma$ is minimized:
\begin{equation}
\label{eq:MBR_tradeoff}
(\alpha_{MBR},\gamma_{MBR})= \left(\frac{2Bd}{2kd-k^2 + k},\frac{2Bd}{2kd-k^2 + k} \right).
\end{equation}

\begin{figure}
\centering
\includegraphics[width=1.0\columnwidth]{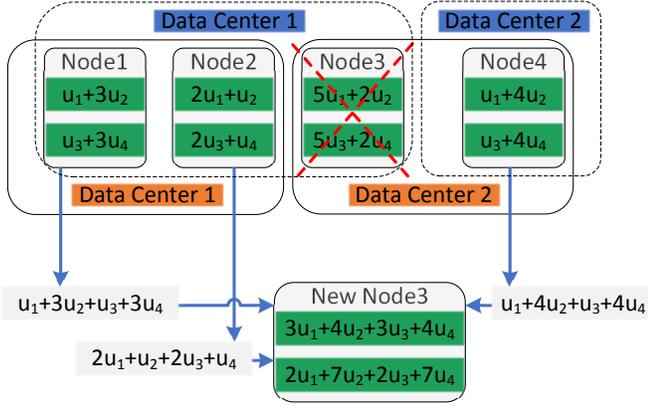}
\caption{network coding based distributed cloud storage}
\label{fig:nc-cloud}
\end{figure}

Fig.~\ref{fig:nc-cloud} is an illustrative example of regenerating code with parameters $n=4,k=2,d=3,\alpha=2,\beta=1,B=4$. $4$ symbols $u_1,u_2,u_3,u_4$ are stored in $4$ storage nodes, and can be retrieved from any $2$ of the storage nodes. A failed node can be regenerated by downloading $1$ symbol each from the $3$ remaining nodes. Here we suppose node 3 fails. For the storage systems simply employing RS code, $4$ symbols have to be downloaded first to decode the original symbols. Then we have to encode the $4$ decoded symbols again to regenerate the symbols in the failed node 3. So the bandwidth needed for repairing the failed node 3 is $4$. For the regenerating code solution in Fig.~\ref{fig:nc-cloud}, by linearly combing the $3$ downloaded symbols $u_1+3u_2+u_3+3u_4$, $2u_1+u_2+2u_3+u_4$ and $u_1+4u_2+u_3+4u_4$ into $2$ symbols $3u_1+4u_2+3u_3+4u_4$ and $2u_1+7u_2+2u_3+7u_4$, we can regenerate a new node 3 that has the same function as the failed node 3. In the repairing process, only $3$ symbols need to be downloaded. Thus the repair bandwidth is saved by $25\%$. In the later section, we will introduce a storage allocation problem for regenerating code in the distributed cloud storage and propose an optimal storage allocation scheme that can achieve the highest possible reliability.


\section{Adaptive Network Coding in the IoT core network}\label{sec:anc}
In this section, we will show our adaptive network coding (ANC) scheme in the IoT core network. The size of the data to be transmitted in the IoT core network may be larger than the size limit of a single packet, such as new firmwares to update the smart devices on-air. So the data needs to be divided into data fragments first then transmitted in multiple packets with one data fragment per packet. A node has to correctly receive enough linearly independent packets to reassemble the original data. 

\subsection{Limitations of existing works}
Since the communication between smart devices are through wireless channel and there may be various fadings and interferences in the channel, some of the nodes may experience packets loss in the communication. When network coding is not utilized, retransmission is a common method to mitigate the packets loss. In some cases, certain packets may get lost most of the time so these packets have to be retransmitted many times until they are correctly received. Thus the overall transmission efficiency will be low. {\it{Here the transmission efficiency is defined as the ratio between the minimum number of the packets needed to reassemble the original data and the number of total packets transmitted from the source node and the intermediate nodes}}. When network coding is utilized, a node can retrieve the original data as long as the node can correctly receive enough number of packets. The entire transmission of the data will not be affected by lacking of certain particular packets. So the overall transmission efficiency will be higher. 

However, there are still limitations for simply applying the network coding in the IoT core network, 
where the number of encoded packets to be generated and sent in the intermediate nodes is predetermined~\cite{sbmedard}. If too few packets are generated, the sink node may not even be able to collect enough packets to decode the original data. If too many packets are generated, the transmission efficiency will be low. Moreover, the fact that the quality of wireless channel is changing over time makes the situation even worse. As an example, when the channel quality becomes better and the packet loss rate goes lower, some of the encoded packets will be useless and the transmission efficiency could be higher. The encoding strategy should be able to dynamically adjusted according to the transmission conditions.

\subsection{ANC Scheme for the IoT core network}

\begin{figure}[h!]
\centering
\includegraphics[width=.9\columnwidth]{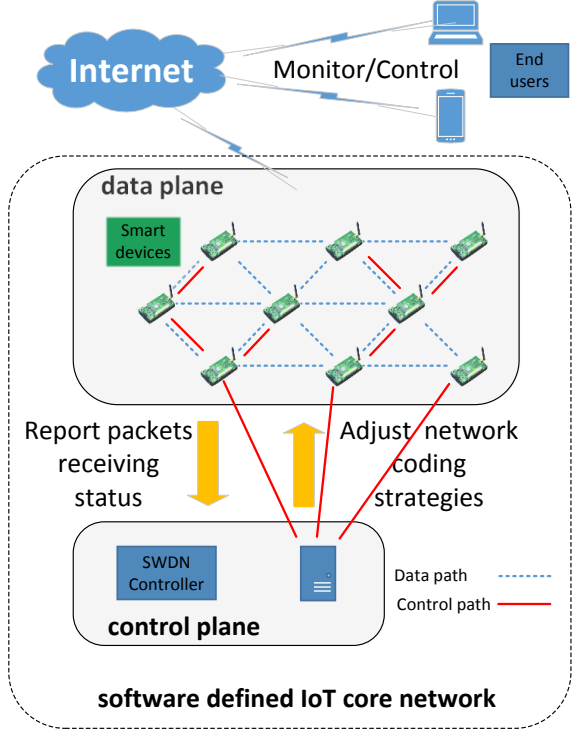}
\caption{ANC Scheme for the IoT core network}
\label{fig:nc-sdwn}
\end{figure}

To overcome the limitations mentioned above, we propose an adaptive network coding (ANC) scheme to further improve the transmission efficiency of the IoT core network with SDWN, as illustrated in Fig.~\ref{fig:nc-sdwn}. In this example, end users could communicate with the IoT core network with SDWN through Internet for monitoring/control purposes. The data transmission inside the IoT core network could benefit from our ANC scheme. In the figure, the IoT core network formed by the smart devices could be the smart appliance network at home, the surveillance camera network on streets or the emission detecting network in factories, etc. Here we only include the data plane and control plane of the SWDN to show the main idea of the ANC scheme. 

For the data plane, the source node will send out linear combinations of the original packets. Each intermediate node will perform random linear network coding. The incoming packets will be linearly combined using random coefficients then sent out to succeeding nodes. \emph{The code rate $r$ of the network coding is defined as the ratio of the number of encoded packets to the number of incoming packets.} And the code rates of the network coding will be automatically adjusted by the SWDN controller mentioned below. The sink nodes will decode the original packets after receiving enough number of linearly independent packets. 

Meanwhile, for the control plane, the smart devices will report packets receiving statistics to the SDWN controller periodically through the control path. Based on the information reported, the SDWN controller will dynamically adjust the network coding strategies to eliminate unnecessary transmissions. If the packet loss becomes higher around some node, more encoded packets will be generated in the corresponding intermediate nodes. If the packet loss becomes lower, the number of encoded packets will be decreased. Since the SDWN controller has the global information of the network, this centralized control will be more effective.

\subsubsection{Source Node Algorithm of the ANC scheme}
In the source node, the data to be transmitted will be fragmented into data packets with equal length. Every $n$ data packets will form a coding group, in which random linear network coding will be performed. For the purpose of clarity, in the paper we assume that there is only one coding group. For each packet $\h_i$ in the coding group, there will be an encoding vector ${\dDelta_i} = [\delta_{i,1}, \delta_{i,2}, \dots, \delta_{i,n}]$ ($\delta_{i,j} \in GF(2^m), 1 \leq i,j \leq n$) attached in front to indicate which packets participate in the encoding of $\h_i$. 
$GF(2^m)$ denotes the finite field with $2^m$ elements where $m \in \{8,16,32,64,\dots\}$ is determined by the symbol size. For an uncoded packet $\h_i$, the elements in the encoding vector will be all-zero except $\delta_{i,i}=1$. The packet format is illustrated in Figure~\ref{fig:pacfor}. The source node will perform Algorithm~\ref{alg:source} to send out the encoded packets $\g_i$ ($1 \leq i \leq \lceil rn \rceil$) where $r$ is the code rate determined by the SWDN controller and $\lceil rn \rceil$ is the ceiling operation to get the smallest integer that is larger than or equal to $rn$.

\begin{figure}
\centering
\includegraphics[width=.8\columnwidth]{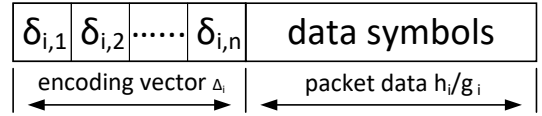}
\caption{ANC packet format}
\label{fig:pacfor}
\end{figure}

\begin{algorithm}                                
\begin{spacing}{1.0}   
\begin{algorithmic}
    \State \Comment the network coding code rate $r$ is determined/updated by the SWDN controller 
    \For{$i=1 \to \lceil rn \rceil$}   
    \If {$i \leq n$}
    \Repeat{$\:\:\:$ generate a random encoding vector $\dDelta_i=$ \\ \hskip65pt $[\delta_{i,1}, \delta_{i,2}, \dots, \delta_{i,n}]$ }
    \Until{$\:\:\:$ $\dDelta_i$ is linearly independent from all the $\dDelta_j,$ \\ \hskip60pt $1 \leq j < i$ (except for $i=1$)}              
    \Else
    \State generate a random encoding vector $\dDelta_i =[\delta_{i,1}, \delta_{i,2}, $ \\ \hskip30pt $\dots, \delta_{i,n}]$   
    \EndIf
    \State \Comment  multiply each symbol of packet data $\h_j$ by $\delta_{i,j}$ $\:\:\:\:\:\:\:\:\:\:\:\:$
    \State $\g_i \Leftarrow \sum_{j=1}^{n}\delta_{i,j}\h_j\:\:\:$ 
    \State send out $\left [\dDelta_i || \g_i \right ]$, where``$||$'' is the concatenation oper-\\ \hskip15pt ation
    \EndFor
    \State report the number of sent out packets to the SWDN controller
\end{algorithmic}
\end{spacing}
\caption{ANC scheme - source node}          
\label{alg:source}                  
\end{algorithm}

In algorithm~\ref{alg:source}, the source node generates $\lceil rn \rceil$ $n$-dimensional encoding vectors $\dDelta_i$ (first $n$ encoding vectors are linearly independent) and uses the vector elements $\delta_{i,1}, \delta_{i,2}, \dots, \delta_{i,n}$ as coefficients to generate and send out encoded packets from the uncoded packets $\h_1,\dots,\h_n$.

\subsubsection{Intermediate Node Algorithm of the ANC scheme}
For each coding group, the intermediate node will open a receiving buffer to store the incoming fresh packets from the nodes designated by the SDWN controller for encoding. The intermediate node will also record all the encoding vectors received in the incoming packets. \emph{A packet is called a fresh packet if its encoding vector is linearly independent from all of the previously received packets'.} In order to get a trade-off between the packet diversity and communication delay, the intermediate node will encode the incoming fresh packets received during a preset interval $\tau$ which is measured by a timer then clear the receiving buffer and wait for the next incoming fresh packet to restart the timer and the buffering. At the end of each time interval, the encoding of the fresh packets in the receiving buffer is performed. For better illustration, we can split each of the $n_{\tau}$ fresh packets in the receiving buffer into the encoding vector $\dDelta_i$ and data $\g_i$  ($1 \leq i \leq n_{\tau}$). $n_{\tau}$ is the number of fresh packets in the receiving buffer. The intermediate node will send out  $\lceil rn_{\tau} \rceil$ encoded packets using Algorithm~\ref{alg:int}, where $r$ is the code rate determined by the SWDN controller. At the same time, the intermediate node will report the receiving and the sending of the packets to the SWDN controller.

\begin{algorithm}                                
\begin{spacing}{1.0}   
\begin{algorithmic}
    \State \Comment the network coding code rate $r$ is determined/updated by the SWDN controller
    \For{$i=1 \to \lceil rn_{\tau} \rceil$}   
    \If {$i \leq n_{\tau}$}
    \Repeat{$\:\:\:$ generate a random vector $\R_i=[r_{i,1}, r_{i,2},$ \\ \hskip65pt $ \dots, r_{i,n_{\tau}}]$}
    \Until{$\:\:\:$ $\R_i$ is linearly independent from all the $\R_j,$ \\ \hskip62pt  $1 \leq j < i$ (except for $i=1$)}              
    \Else
    \State generate a random vector $\R_i=[r_{i,1}, r_{i,2}, \dots, r_{i,n_{\tau}}]$   
    \EndIf
    \State \Comment multiply each symbol of $\dDelta_j$ by $r_{i,j}$ $\:\:\:\:\:\:\:\:\:\:\:\:\:\:\:\:\:\:\:\:\:\:\:\:\:\:\:\:\:\:\:\:\:$
    \State $\dDelta'_i \Leftarrow \sum_{j=1}^{n_{\tau}}r_{i,j} \dDelta_j \:\:\:$ 
    \vskip5pt
    \State \Comment multiply each symbol of $\g_j$ by $r_{i,j}$ $\:\:\:\:\:\:\:\:\:\:\:\:\:\:\:\:\:\:\:\:\:\:\:\:\:\:\:\:\:\:\:\:\:$
    \State $\g'_i \Leftarrow \sum_{j=1}^{n_{\tau}}r_{i,j} \g_j \:\:\:$ 
    \vskip5pt
    \State send out $\left [\dDelta'_i || \g'_i\right ]$, where``$||$'' is the concatenation oper- \\ \hskip15pt ation
    \EndFor
    \State report the number of received packets from each of the other nodes/the number of sent out packets to the SWDN controller    
\end{algorithmic}
\end{spacing}
\caption{ANC scheme - intermediate node}          
\label{alg:int}                  
\end{algorithm}

In algorithm~\ref{alg:int}, the intermediate node generates $\lceil rn_{\tau} \rceil$ $n_{\tau}$-dimensional vectors $\R_i$ (first $n_{\tau}$ vectors are linearly independent) and uses the vector elements $r_{i,1}, r_{i,2}, \dots, r_{i,n_{\tau}}$ as coefficients to generate and send out recoded packets from the received packets $\g_1,\dots,\g_{n_{\tau}}$. The corresponding encoding vectors are processed the same way.

\subsubsection{Sink Node Algorithm of the ANC scheme}
Once the sink node receives $n$ linearly independent packets, it can solve the following equation to decode the original packets data $\h_1, \h_2, \dots, \h_n$:
\begin{equation}
\label{eq:encoding_msr_h}
\begin{bmatrix}
\dDelta_1 \\
\dDelta_2\\
\vdots \\
\dDelta_n
\end{bmatrix}
\begin{bmatrix}
\h_1 \\
\h_2\\
\vdots \\
\h_n
\end{bmatrix}
= 
\begin{bmatrix}
\g_1 \\
\g_2\\
\vdots \\
\g_n
\end{bmatrix}.
\end{equation}
Then $\h_1, \h_2, \dots, \h_n$ can be concatenated to restore the original data. The sink node will also periodically report the packets receiving status to the SWDN controller.

\subsubsection{SDWN Controller Algorithm of the ANC scheme}
Since the SDWN controller receives the packets sending/receiving status from each of the nodes in the IoT core network periodically , it can adjust the code rate of the network coding for each of the nodes accordingly. Suppose $\eta_i$ is the number of packets sent by node $i$, $\mathbb{N}_i$ is the set of succeeding nodes receiving the packets from node $i$, and $\eta_j^{(i)}$ is the number of packets received by node $j \in \mathbb{N}_i$. The code rate $r_i$ of node $i$ can be determined by
\begin{equation}
r_i = \frac{\eta_i}{\max_{j \in \mathbb{N}_i}(\eta_j^{(i)})},
\end{equation}
where $\max()$ is the operation to select the maximum element. Besides the code rate, since the SDWN controller has all of the topology information, for each node $i$, it can specify the succeeding relay nodes to receive the packets sent from node $i$. 

\begin{algorithm}                                
\begin{spacing}{1.0}   
\begin{algorithmic}
    \For {each of the source node or intermediate node $i$}
    \State calculate $r_i$ according to the packet sending/receiving \\ \hskip15pt status of node $i$ and $\mathbb{N}_i$
    \State send $r_i$ to update the code rate of node $i$ through the \\ \hskip15pt control path
    \EndFor   
\end{algorithmic}
\end{spacing}
\caption{ANC scheme - SDWN controller}          
\label{alg:sdwn}                  
\end{algorithm}

\subsection{Performance Evaluation of the ANC scheme}

\begin{figure}
\centering
\includegraphics[width=1.05\columnwidth]{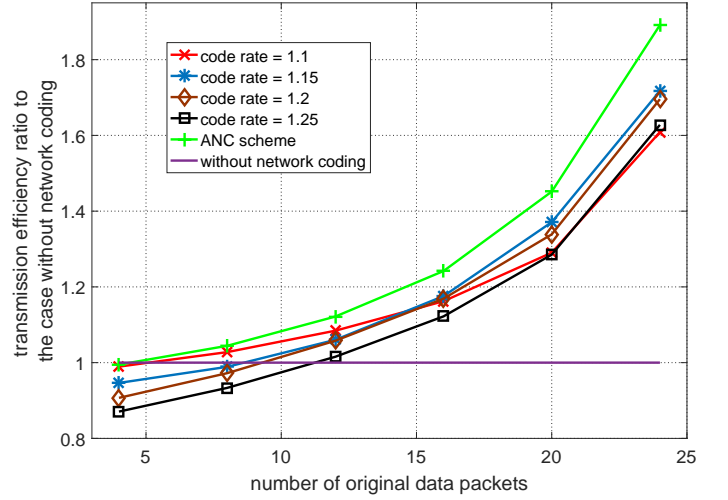}
\caption{Performance of the ANC scheme}
\label{fig:nc-simulation}
\end{figure}

In Fig.~\ref{fig:nc-simulation} are the simulation results of the ANC scheme. The simulation is carried out in the NS-2 platform. In the simulation, the leftmost node in Fig.~\ref{fig:nc-sdwn} tries to send data to the rightmost two nodes. The qualities of the channels between the intermediate nodes are chosen randomly. We calculate the transmission efficiencies under different numbers of total original data packets (can be viewed as one coding group for network coding). For performance comparison, we also simulate the cases for no network coding (pure retransmission) and network coding with predetermined code rates as in~\cite{sbmedard}. 
To make the comparison more clear, we normalize the transmission efficiencies for the cases with fixed network coding and the ANC scheme by the transmission efficiency for the case without network coding. From the simulation results, we can see that the transmission efficiency for the case with network coding becomes higher than the case without network coding with the increasing of the number of original data packets. And the ANC performs best among all the cases. It can also be seen that the performance gain of the ANC scheme will increase when the number of total original data packets becomes larger. 

\section{Optimal Storage Allocation in the Distributed Cloud Storage Utilizing Network Coding}\label{sec:osa}
In this section we first introduce a storage allocation problem for the distributed cloud storage utilizing network coding. Then we propose the optimal storage allocation (OSA) scheme. We also show the performance of the optimal storage allocation scheme.

\subsection{Storage Allocation Problem}
In the storage allocation problem $\mathbb{S}$, the data is encoded with an $(n,k,d,\alpha,\beta,B)$ regenerating code, so there will be $n$ encoded parts. There are $N$ ($N<n$) data centers in total to store these parts, each with a failure probability of $p$. If a data center fails, all the data stored in the data center will be lost. If the total number of encoded parts in the remaining data centers is less than $k$, the original data cannot be recovered any more. Since there are more encoded parts than the data centers, there will be different allocation strategies of the encoded parts with different storage reliabilities. For the problem $\mathbb{S}$, we try to find out the allocation strategy with the lowest failure probability among all the possible allocation strategies.

\begin{defi}
A set $S$ with $N$ elements $n_1,n_2,\dots,n_N$ ($n_i>0, 1 \leq i \leq N$) is a valid allocation 
if $\sum_{i=1}^{N}n_i=n$.
\end{defi}

\begin{defi}
For an allocation strategy $S$, the failure probability $P$ is defined as the probability that the original data cannot be recovered given the failure probabilities of individual data centers.
\end{defi}

The problem $\mathbb{S}$ can be formulated as: 

\begin{align}
\label{eq:optP}
\begin{split}
\mbox{find}~~~ & \mbox{the allocation $S$ among all the valid allocations},\\ 
\mbox{such that}~~~  &  \sum_{\forall  S_j \subseteq S }P\left(\sum_{n_i\in S_j}n_i \ge n-k\right) \mbox{is minimal}.
\end{split}
\end{align}
As an example, for the regenerating code in Fig.~\ref{fig:nc-cloud}, $n=4$ encoded parts are stored in $N=2$ data centers.   Suppose the failure probability of each data center is $p=0.01$. Two storage allocation strategies are shown in the figure. For the first allocation strategy $S=\{3,1\}$ (blue data centers with dash lines), $3$ encoded parts are stored in data center 1 and $1$ encode part is stored in data center 2. It is easy to calculate the failure probability of this allocation strategy is $0.01$. For the second allocation strategy $S=\{2,2\}$ (orange data centers with solid lines), $2$ encoded parts are stored in each of the two data centers. The failure probability of this allocation strategy is $0.0001$, which is much lower than that of the first strategy.

\subsection{Optimal Storage Allocation Scheme}
In this section, we will show our optimal storage allocation (OSA) scheme to solve the storage allocation problem. The OSA scheme includes two stages: the first stage is to find out all the possible valid allocations $S$ and the second stage is to calculate the failure probability $P$ for each $S$. Then we can output the allocation with the lowest failure probability through comparison. 

\subsubsection{Stage I: Find out All the Possible Valid allocations $S$}
The naive approach to find out all the possible valid $S$ is to search all the possible combinations of $n_1,n_2,\dots,n_N$ such that $\sum_{i=1}^{N}n_i=n$. However, this approach will take exponential time thus is not practical. In our OSA scheme, we first change this problem into an integer partition problem~\cite{comb}: to allocate $n$ encoded parts into $N$ storage centers is the same as to partition an integer $n$ into $N$ parts. Take $n=7,N=3$ as an example, there are $4$ ways to partition $7$ into $3$ parts: $\{1,1,5\}$, $\{1,2,4\}$, $\{1,3,3\}$ and $\{2,2,3\}$, which also consist all the possible valid allocations. Then we can solve the integer partition problem using dynamic programming based on the following recurrence equation:

\begin{equation}\label{eq:part}
\mathbb{P}(n,N) = \mathbb{P}(n-1,N-1) +  \mathbb{P}(n-N,N),
\end{equation}
where $\mathbb{P}(i,j)$ is the total number of ways of partitioning integer $i$ into $j$ parts. The first part of equation~(\ref{eq:part}) is the subproblem where at least one $1$ exists in the partition and the second part of the equation is the subproblem where no $1$ exists in the partition. Thus the solution to the original problem perfectly incorporates these two subproblems, which make it feasible to solve using dynamic programming. We propose Algorithm~\ref{alg:part} to find out all the possible valid allocations $S$. In the algorithm, we use $S(i,j,k)$ to represent the $k^{th}$ valid allocation out of the $\mathbb{P}(i,j)$ allocations for $i$ encoded parts and $j$ storage centers. $\cup$ is the union operation between two sets. The addition between a set $S$ and a number $x$ is defined as the additions between every element of the set and the number: 
\begin{equation}\label{eq:setplus}
S+x := \{n_i + x | n_i \in S \:\:\mbox{for}\:\: 1 \leq i \leq N\}.
\end{equation}
After the execution of the algorithm, we can get all the possible valid allocations $S(n,N,k)$ $(1 \leq k \leq \mathbb{P}(n,N))$. It is easy to see that the algorithm runs in polynomial time.

\begin{algorithm}                                
\begin{spacing}{1.0}   
\begin{algorithmic}[1]               
    \Require the number of encoded parts $n$ and the number of storage centers $N$
    \Ensure all the valid allocations $S(n,N,l)$, $(1 \leq l \leq \mathbb{P}(n,N))$
    \Function{FindAllAllocations}{$n,N$}
    \For{$i=1 \to n$}      
        \State $\mathbb{P}(i,1) \Leftarrow 1$
        \State $S(i,1,1) \Leftarrow i$
        \For{$j=2 \to N$}        
            \If{$i \geq j$}
                \If{$i-j < j$}
                    \State $\mathbb{P}(i,j) \Leftarrow \mathbb{P}(i-1,j-1)$
                    \State $S(i,j,l) \Leftarrow S(i-1,j-1,l) \cup \{1\}$, for \Statex \hskip72pt $1 \leq l \leq \mathbb{P}(i,j)$
                \Else
                    \State $\mathbb{P}(i,j) \Leftarrow \mathbb{P}(i-1,j-1) + \mathbb{P}(i-j,j)$
                    \State $S(i,j,l) \Leftarrow S(i-1,j-1,l) \cup \{1\}$, for \Statex \hskip72pt  all $\:\:1 \leq l \leq \mathbb{P}(i-1,j-1)$                    
                    \State $S(i,j,\mathbb{P}(i-1,j-1) + l) \Leftarrow S(i-j,j,l) +$ \Statex \hskip72pt $1$, for all $\:\:1 \leq l \leq \mathbb{P}(i-j,j)$
                \EndIf
            \EndIf
        \EndFor
    \EndFor
    \EndFunction
\end{algorithmic}
\end{spacing}
\caption{OSA scheme - stage I}          
\label{alg:part}                  
\end{algorithm}

\begin{thm}
Algorithm~\ref{alg:part} can output all the valid allocations  $S(n,N,l)$ for  $1 \leq l \leq \mathbb{P}(n,N)$, where $S(n,N,l)$ represents the $l^{th}$ valid allocation out of the $\mathbb{P}(n,N)$ allocations for $n$ encoded parts and $N$ storage centers.
\end{thm}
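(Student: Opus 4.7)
The plan is to prove the theorem by strong induction on the integer $n$, with a secondary induction on $N$. The inductive hypothesis will assert that for every pair $(n',N')$ preceding $(n,N)$ in lexicographic order, Algorithm~\ref{alg:part} produces a list $S(n',N',1),\ldots,S(n',N',\mathbb{P}(n',N'))$ that enumerates every valid allocation of $n'$ into $N'$ positive parts exactly once, with $\mathbb{P}(n',N')$ equal to the true count. The base case $N=1$ is immediate from lines 3–4, since $S(i,1,1)=\{i\}$ is the unique partition of $i$ into one positive part.

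The heart of the proof is the combinatorial decomposition underlying the recurrence $\mathbb{P}(n,N)=\mathbb{P}(n-1,N-1)+\mathbb{P}(n-N,N)$. I would partition the valid allocations of $n$ into $N$ parts into two disjoint classes: class $A$ consisting of those allocations that contain at least one part equal to $1$, and class $B$ consisting of those whose every part is at least $2$. Removing one copy of the part $1$ defines a bijection from $A$ onto the set of valid allocations of $n-1$ into $N-1$ parts; subtracting $1$ from every part defines a bijection from $B$ onto the set of valid allocations of $n-N$ into $N$ parts. Class $B$ is empty precisely when $n-N<N$, which matches the guard on line 7 of the algorithm.

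To finish, I would observe that lines 9 and 12–13 implement the inverses of these two bijections: $S(i-1,j-1,l)\cup\{1\}$ reinserts the removed $1$ to recover class $A$, and $S(i-j,j,l)+1$ (using the set-plus-integer notation of equation~(\ref{eq:setplus})) raises every part by $1$ to recover class $B$. By the inductive hypothesis, both source lists are complete and non-repeating, so their images give a complete, non-repeating enumeration of $A\cup B$, and the count updates on lines 8 and 11 reproduce the recurrence; setting $n=n$, $N=N$ yields the theorem.

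The main obstacle is not the induction itself but the bookkeeping around the objects $S(i,j,l)$: the paper defines them as sets, yet a genuine partition such as $\{1,1,5\}$ has repeated parts, so I must treat them as multisets and verify that $\cup\{1\}$ and $+1$ act as well-defined, injective multiset operations. A secondary subtlety is justifying disjointness of the images: any partition produced by line 9 has minimum part $1$ (so lies in $A$), while any partition produced by line 12–13 has minimum part $\geq 2$ (so lies in $B$), which makes the two contributions to $S(i,j,\cdot)$ automatically distinct. Finally, the degenerate ranges $i<j$ (where no valid allocation exists) and $i=j$ (where $i-j<j$ forces only the $A$ branch, correctly producing the all-ones partition) need to be checked, but they follow directly from the guards in the algorithm.
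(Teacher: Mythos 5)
Your proposal is correct and follows essentially the same route as the paper's proof: the same decomposition of partitions into those containing a part equal to $1$ (handled via $S(i-1,j-1,l)\cup\{1\}$) and those with every part at least $2$ (handled via $S(i-j,j,l)+1$), matched to the guard $i-j<j$. Your version is somewhat more careful than the paper's — making the two correspondences explicit bijections, noting that the $S(i,j,l)$ must be treated as multisets, and checking disjointness and the degenerate cases — but the underlying argument is the same.
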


\begin{proof}
Algorithm~\ref{alg:part} calculates $S(i,j,l)$ $(1 \leq l \leq \mathbb{P}(i,j))$ for $1 \leq j \leq N$ from $i=1$ to $i=n$ through a bottom-up manner and we can get $S(n,N,l)$ $(1 \leq l \leq \mathbb{P}(n,N))$ for $i=n,j=N$. For each $i$, line 3 to line 4 first calculate $\mathbb{P}(i,1)=1$ and $S(i,1,1)={i}$, corresponding to the case of allocating $i$ encoded data parts into one data center. Then for each $j=2,\dots,N$, there will be two cases:
\begin{itemize}
\item Line 8 to line 9 correspond to the case with $i-j < j$, where at least one storage node will be allocated only 1 encoded data part. The second part of equation~(\ref{eq:part}) does not exist. So the number of ways of allocating $i$ encoded data parts into $j$ storage nodes will be equal to that of allocating $i-1$ encoded data parts into $j-1$ storage nodes: $\mathbb{P}(i,j) =  \mathbb{P}(i-1,j-1)$. And each of the valid allocations $S(i,j,l)$ will be the union of each already calculated allocations $S(i-1,j-1,l)$ with the set $\{1\}$.
\item Line 11 to line 13 correspond to the case with $i-j \geq j$, where $\mathbb{P}(i,j)$ is the summation of two previously calculated parts as shown in equation~(\ref{eq:part}). The computation of the first part and the corresponding valid allocations is the same as in line 8 to line 9. The second part is the number of ways of allocating $i-j$ encoded data parts into $j$ storage nodes $\mathbb{P}(i-j,j)$, where each of the storage node will be allocated at least $2$ encoded data parts. Thus each of the valid allocations $S(i,j, \mathbb{P}(i-1,j-1) + l)$ will be each of the already calculated allocations $S(i-j,j,l)$ plus $1$ as defined in equation~(\ref{eq:setplus}).
\end{itemize}
\end{proof}

Fig.~\ref{fig:partexample} illustrates the algorithm for $n=7$ encoded data parts and $N=3$ data centers. Each $(i,j)$ pair  represent the calculation of $\mathbb{P}(i,j)$ and $S(i,j,l)$. The pairs without shades are calculated using line 8 to line 9 (the first case) while the pairs in shades are calculated using line 11 to line 13 (the second case). The solid lines correspond to the first part of equation~(\ref{eq:part}) and the dashed lines correspond to the second part. From the figure we can clearly see that $(7,3)$ can be efficiently calculated using the results of $(6,2)$ and $(4,3)$, which have already been calculated the same way as illustrated in Fig.~\ref{fig:partexample}.

\begin{figure}
\centering
\includegraphics[width=0.5\columnwidth]{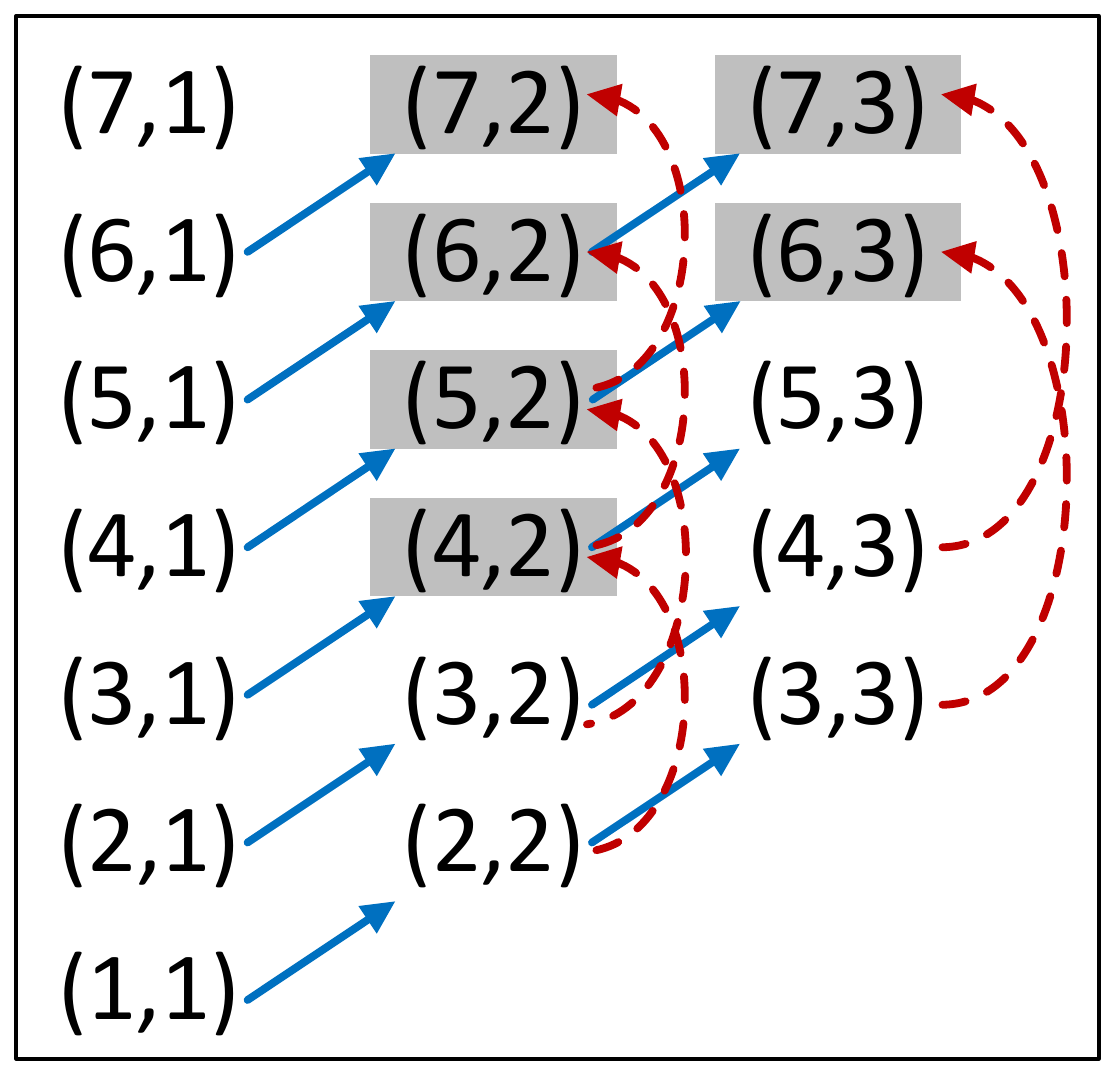}
\caption{The calculation of Algorithm~\ref{alg:part} for $n=7$, $N=3$}
\label{fig:partexample}
\end{figure}

\subsubsection{Stage II: Calculate the Failure Probability $P$ for Each Valid Allocation $S$}
After we get all the possible valid allocations $S$, we can calculate the failure probability $P_S$ for each of them. The goal function of equation~(\ref{eq:optP}) can be further written as:
\begin{align}
\label{eq:optPf}
\begin{split}
P_S  & =   \sum_{\forall  S_j \subseteq S }P\left(\sum_{n_i\in S_j}n_i > n-k\right) \\
 & =\sum_{\forall  S_j \subseteq S,\:\: s.t.\sum_{n_i\in S_j}n_i > n-k}p^{\left |  S_j\right |} (1-p)^{N - \left |  S_j\right |}, 
\end{split}
\end{align}
where $p$ is the failure probability of each storage center, $\left | S_j \right |$ is the number of elements in subset $S_j$. If we try to directly calculate $P_S$ for every subset $S_j \in S$, the order of the number of subsets to be calculated will be approximate to $\sum_{\left | S_j \right |=1}^{N}\begin{pmatrix}N\\\left | S_j \right |\end{pmatrix}\approx 2^{N}$, where $\begin{pmatrix}N\\ \left | S_j \right |\end{pmatrix}$ denotes the number of $\left | S_j \right |$-combinations of the set $S$, thus making it infeasible to calculate in practice. 

In the second stage of the OSA scheme (Algorithm~\ref{alg:calc}), we propose to change the exhaust search problem into a number counting problem. More specifically, for each $i$ ($1 \leq i \leq N$), we count the total number of subsets $ S_j^{(i)}$ such that $ S_j^{(i)}$ denotes the subsets with exactly $i$ elements and the summation of every element in $ S_j^{(i)}$ is larger than $n-k$:

\begin{equation}\label{eq:ps}
P_S =  \sum_{i=1}^{N} \left | \left  \{  S_j^{(i)}\:\:  | \sum_{n_i\in S_j^{(i)}}n_i > n-k \right \} \right | p^{  i} (1-p)^{N -   i }. 
\end{equation}
In Algorithm~\ref{alg:calc}, we first calculate the summations of every subset, which can be viewed as a variant of the subset-sum problem~\cite{IA}. For each $i$ ($1 \leq i \leq N$), we merge the same-value summation results of the subsets $S_j^{(i)}$ and count the total number of $S_j^{(i)}$ which have that summation value. Then for the subsets that have summation results larger than $n-k$, we can calculate the corresponding failure probability according to equation~(\ref{eq:ps}). In the algorithm, $T,L,R$ represent three auxiliary lists for subset summation. For a auxiliary list $X$, we use $X.\mathsf{length}$ to denote the number of elements of the list, $X.\mathsf{index}$ to denote the current index number of  the list, $V_X(j)$ to denote the value of $j^{th}$ element in $X$, and $C_X(i,j)$ to denote the total number of subsets that have the same element number $i$ and the same summation value $V_X(j)$. Although the total number of subsets is $2^N$, Algorithm~\ref{alg:calc} is a polynomial time algorithm:

\begin{thm}
The complexity of Algorithm~\ref{alg:calc} is $\mathcal{O}(nN)$.
\end{thm}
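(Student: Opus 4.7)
The plan is to bound the running time of Algorithm~\ref{alg:calc} in two stages: first bound the size of the state maintained in the auxiliary lists $T, L, R$, then bound the total number of elementary operations the algorithm performs to build and consume that state, and finally account for the closed-form evaluation in equation~(\ref{eq:ps}).

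For the state bound, I would begin with the observation that, for any subset $S_j \subseteq S$, the sum $\sum_{n_i \in S_j} n_i$ lies in $\{0, 1, \ldots, n\}$ and the size $|S_j|$ lies in $\{0, 1, \ldots, N\}$. Consequently, any auxiliary list $X$ built by the algorithm contains at most $n+1$ distinct values $V_X(j)$, and for each such value the algorithm stores at most $N+1$ counts $C_X(i, j)$ (one per subset-size stratum). Hence the cumulative storage, and therefore the number of slots ever written or read, is bounded by $(n+1)(N+1) = \mathcal{O}(nN)$.

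For the work bound, I would trace the control flow. The algorithm consumes the $N$ entries of $S$ one at a time, and each consumption updates the counts by the subset-sum-style recurrence that adds the current element's value to every sum already recorded in $X$ and shifts the associated size index by one; crucially this is a \emph{pointwise} update touching each pair $(i, V_X(j))$ at most a constant number of times across the entire execution, because the pair is either copied forward unchanged or replaced by the shifted contribution. The critical step — and the one I expect to be the main obstacle — is arguing that the merge of $L$ and $R$ into $T$ can be executed in time proportional to $|L| + |R|$ rather than $|L|\cdot|R|$: this requires that the lists be kept sorted by $V_X(\cdot)$ so that a linear two-pointer merge is possible, and that equal-value summations be coalesced on the fly so that the list length never exceeds the $n+1$ bound established above. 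Granted this invariant, the amortized work per processed element of $S$ is $\mathcal{O}(n)$, and summing over the $N$ elements gives $\mathcal{O}(nN)$.

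Finally, equation~(\ref{eq:ps}) is evaluated by a single pass over the $\mathcal{O}(nN)$ populated table cells, accumulating the precomputed weights $p^{i}(1-p)^{N-i}$; this pass is absorbed into the same $\mathcal{O}(nN)$ bound. Combining the state-size bound, the merge-based work bound, and the final summation pass yields the stated $\mathcal{O}(nN)$ complexity.
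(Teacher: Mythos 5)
Your proposal is correct and takes essentially the same route as the paper: the paper's proof likewise rests on the two observations that the merged list $T$ can never hold more than $n$ distinct summation values (since every subset sum is at most $n$) and that the coalescing two-pointer merge therefore makes each of the $N$ for-loop iterations cost $\mathcal{O}(n)$. Your additional bookkeeping --- the $(n+1)(N+1)$ bound on the counter table and the final pass for equation~(\ref{eq:ps}) --- elaborates on details the paper leaves implicit (including the per-value update of the $i$ counter strata, which both you and the paper treat as absorbed into the merge cost) but does not change the argument.
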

\begin{proof}
Since the summation of a valid allocation $S$ itself is the largest in all the summations of the subsets of $S$, the element number $T.\mathsf{length}$ in $T$ cannot exceed $n$. Through the merge of subsets with the same summation values, each of the $N$ for-loops has the complexity $\mathcal{O}(n)$. So the total complexity is $\mathcal{O}(nN)$.
\end{proof}

\begin{algorithm}                                
\begin{spacing}{1.0}   
\begin{algorithmic}[1]               
    \Require a valid allocation $S(n,N,l)$, $(1 \leq l \leq \mathbb{P}(n,N))$
    \Ensure the failure probability $P_S$ of the allocation
    \Function{CalculateProbability}{$S(n,N,l)$}
    \State $\{n_1,n_2,\dots,n_N\} \Leftarrow$ sort the allocation $S(n,N,l)$ in \Statex \hskip15pt non-descending order
    \State $L \Leftarrow \{0\}$
    \State \Comment calculate summations of every subset   $\:\:\:\:\:\:\:\:\:\:\:\:\:\:\:\:\:\:\:\:\:\:\:\:\:$
    \For{$i=1 \to N$}    
        \State $T \Leftarrow  \phi $
        \State $R \Leftarrow L + n_i$, $C_R(1,1) \Leftarrow 1$
        \State $C_R(l,j) \Leftarrow  C_L(l-1,j)$, for all nonzero $C_L(l,j)$, \Statex \hskip30pt $2 \leq j \leq L.\mathsf{length}, 2 \leq l \leq i$, $i \ge 2$
        \State $L.\mathsf{index},R.\mathsf{index} \Leftarrow 1$  
        \While{$L.\mathsf{index} \leq L.\mathsf{length}$}
            \If {$V_L(L.\mathsf{index}) == V_R(R.\mathsf{index})$}
            \State $T \Leftarrow T \cup {V_R(R.\mathsf{index})}$
            \State {for all $1 \leq l \leq i$, $C_T(l,T.\mathsf{length}) \Leftarrow $\Statex \hskip58pt $C_L(l,L.\mathsf{index}) + C_R(l,R.\mathsf{index})$}
            \State increase $L.\mathsf{index}, R.\mathsf{index}$ by 1
            \Else
            \If  {$V_L(L.\mathsf{index}) < V_R(R.\mathsf{index})$}
            \State $T \Leftarrow T \cup {V_L(L.\mathsf{index})}$        
            \State $C_T(l,T.\mathsf{length}) \Leftarrow C_L(l,L.\mathsf{index})$, for \Statex \hskip72pt all $1 \leq l \leq i$    
            \State increase $L.\mathsf{index}$ by 1
            \Else
            \State $T \Leftarrow T \cup {V_R(R.\mathsf{index})}$
            \State $C_T(l,T.\mathsf{length}) \Leftarrow C_R(l,R.\mathsf{index})$, for \Statex \hskip72pt all $1 \leq l \leq i$
            \State increase $R.\mathsf{index}$ by 1
            \EndIf
            \EndIf
        \EndWhile
        \State $oldLn \Leftarrow T.\mathsf{length}$
        \State $T \Leftarrow T \cup \{V_R(R.\mathsf{index}), V_R(R.\mathsf{index} + 1),\dots, $ \Statex \hskip30pt $ V_R(R.\mathsf{length})\}$
        \State $\{C_T(l,oldLn+1), \dots,C_T(l,T.\mathsf{length})\}  \Leftarrow $ \Statex \hskip30pt $ \{C_R(l,R.\mathsf{index}),\dots,C_R(l,R.\mathsf{length})\}$, $1 \leq l \leq i$                   
        \State $L \Leftarrow T$
    \EndFor
    \State $P_S \Leftarrow 0$
    \State   \Comment count the number of subsets with the summation $\:\:$\Statex \hskip15pt results larger than $n-k$ 
    \For{$i=1 \to N$}      
        \State $sum \Leftarrow 0$
        \For {$j=1 \to T.\mathsf{length}$}
            \If {$V_T(j)>n-k$}
            \State $sum \Leftarrow sum + C_T(i,j)$
            \EndIf
        \EndFor
        \State $P_S \Leftarrow P_S + sum \times p^i(1-p)^{N-i}$
    \EndFor 
    \EndFunction
\end{algorithmic}
\end{spacing}
\caption{OSA scheme - stage II}          
\label{alg:calc}                  
\end{algorithm}

\begin{thm}
Algorithm~\ref{alg:calc} can output the failure probability $P_S$ for the input allocation.
\end{thm}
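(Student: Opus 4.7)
The plan is to verify correctness of Algorithm~\ref{alg:calc} in two stages, matching the natural structure of the algorithm: first the subset-sum tallying loop (lines that update $L, R, T$) computes, for every integer $v$ and every cardinality $l$, the exact number of $l$-element subsets of $S$ that sum to $v$; then the final double loop reads off $P_S$ via equation~(\ref{eq:ps}). Because equation~(\ref{eq:ps}) was already derived from equation~(\ref{eq:optPf}), once the counts are shown to be correct, the probability output is immediate.

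The heart of the proof will be a loop invariant proved by induction on $i$. After the $i$-th iteration of the outer \textbf{for} loop, I want to establish: (i) the list $L$ (equivalently $T$ at the bottom of the iteration) is sorted in non-decreasing order without duplicate values; (ii) $\{V_T(1), \ldots, V_T(T.\mathsf{length})\}$ is exactly the set of distinct subset sums of $\{n_1, \ldots, n_i\}$; and (iii) for each $1 \leq l \leq i$ and each index $j$, the entry $C_T(l, j)$ equals the number of $l$-element subsets of $\{n_1, \ldots, n_i\}$ whose elements sum to $V_T(j)$. The base case $i = 1$ is verified by inspection, since $L = \{0\}$ on entry, $R = \{n_1\}$ with $C_R(1,1) = 1$, and the merge produces $T = \{0, n_1\}$ with $C_T(1, 2) = 1$.

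For the inductive step, I would split each subset of $\{n_1, \ldots, n_i\}$ according to whether it contains $n_i$. Subsets not containing $n_i$ are counted by $L$ via the inductive hypothesis; subsets containing $n_i$ correspond bijectively to subsets of $\{n_1, \ldots, n_{i-1}\}$ with $n_i$ adjoined, so their sums form the shifted list $R = L + n_i$, a one-element subset originates from the empty subset (giving $C_R(1,1) = 1$), and an $l$-element subset with $l \geq 2$ originates from an $(l-1)$-element subset with $C_R(l, j) = C_L(l-1, j)$. The while-loop then performs a standard merge of two sorted lists, and at each coincidence $V_L(L.\mathsf{index}) = V_R(R.\mathsf{index})$ it adds $C_L(l,\cdot) + C_R(l,\cdot)$, which is exactly the sum of the two contributions. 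Establishing that the merge preserves sortedness, has no duplicates, and that the tail-appending block (after the while loop) handles the unprocessed suffix of $R$ correctly will be the main bookkeeping burden.

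Once the invariant holds at $i = N$, the final loop computes
\[
P_S \;=\; \sum_{i=1}^{N} \Bigl( \sum_{j:\, V_T(j) > n-k} C_T(i, j) \Bigr)\, p^i (1-p)^{N-i},
\]
which, by the invariant, is literally the right-hand side of equation~(\ref{eq:ps}), hence equals the failure probability of the input allocation as defined. The hard part is purely combinatorial care in the inductive step: keeping the cardinality index $l$, the value index $j$, and the two merge pointers $L.\mathsf{index}$, $R.\mathsf{index}$ all aligned, and checking that every subset is counted exactly once under its correct cardinality. No deep idea is needed beyond this; the correctness is then a direct consequence of the inductive hypothesis together with equation~(\ref{eq:ps}).
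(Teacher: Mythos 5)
Your proposal is correct and follows essentially the same route as the paper's proof: an induction over the outer loop maintaining the invariant that $L$ holds the subset sums of the already-processed elements with per-cardinality counters, the split of subsets by whether they contain $n_i$ (giving $R = L + n_i$ with $C_R(l,\cdot) = C_L(l-1,\cdot)$), a sorted-list merge that adds counters on value coincidences, and a final read-off via equation~(\ref{eq:ps}). You merely state the loop invariant more explicitly than the paper does; no substantive difference.
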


\begin{proof}
In line 3 we initialize the auxiliary list $L$ with an empty element $'0'$, representing the summation result of $0$ element of the input allocation $S$. Line 4 to line 31 calculate the summations of every subset of the input allocation. At the beginning of each round $i$ of the for loop $i=1,\dots,n$, the auxiliary list $L$ is the list containing the summation results of every subset of the first $l$ ($0 \leq l < i$) elements of the input allocation $S$. Line 7 to line 8 calculate the auxiliary list $R$ by adding the new element $n_i$ to $L$: $R = L + n_i$. Since the first element in $L$ is the empty $'0'$, $C_R(1,1)$ will be $1$, indicating that the total number of subsets that have $1$ element and summation value $n_i$ is $1$. Then the rest value of $C_R(l,j)$ will be $C_L(l-1,j)$ for $2 \leq j \leq L.\mathsf{length}$ because of the addition of $n_i$ to $L$. The elements of allocation $S$ are sorted in non-descending order, thus the elements in both $L$ and $R$ are also in non-descending order. From line 10 to line 29, we merge the elements of the auxiliary lists $L$ and $R$ into a temporary auxiliary list $T$ one by one, following the rules below:

\begin{itemize}
\item If the value of the current element $V_L(L.\mathsf{index})$ in $L$ is equal to the current element $V_R(R.\mathsf{index})$ in $R$, add the value into $T$. The corresponding counter $C_T(l,T.\mathsf{length})$ is equal to the sum of the two counters: $C_T(l,T.\mathsf{length}) = C_L(l,L.\mathsf{index}) + C_R(l,R.\mathsf{index})$ for $1 \leq l \leq i$.
\item If the value of the current element $V_L(L.\mathsf{index})$ in $L$ is smaller than the current element in $R$, add the element $V_L(L.\mathsf{index})$ into $T$. Set the counter $C_T(l,T.\mathsf{length})$ to $C_L(l, L.\mathsf{index})$ for $1 \leq l \leq i$.
\item If the value of the current element $V_R(R.\mathsf{index})$ in $R$ is smaller than the current element in $L$, add the element $V_R(R.\mathsf{index})$ into $T$. Set the counter $C_T(l,T.\mathsf{length})$ to $C_R(l, R.\mathsf{index})$ for $1 \leq l \leq i$.
\item Since the last element in $L$ is smaller than some elements in $R$, after merging $L$ into $T$, we can directly merge the remaining elements of $R$ into $T$ through line 28 to line 29.
\end{itemize}
At the end of each for loop, the merged list $T$ is assigned back to $L$ for the next round of calculation. After $N^{th}$ round, list $T$ has the summation results of all the subsets in $S$.

Then the failure probability of $S$ can be easily calculated from line 34 to line 42 by counting the number of subsets with the summation results larger than $n-k$.
\end{proof}

Fig.~\ref{fig:sumsubset} illustrates the summations for all the subsets of $S=\{1,2,2\}$. For $i=1$, $L = \{0\}$, $C_L(1,1)=0$, $R=\{1\}$, $C_R(1,1)=1$. The merged list $T=\{0,1\}$, $C_T =\{0,1\}$. For the second round, $L,C_L$ are assigned the values of $T,C_T$. According to line 7 and line 8 of Algorithm~\ref{alg:calc}, $R=L+n_2=\{2,3\}$ and $C_R(2,2) = C_L(1,2) = 1$. At the end of the third round, we can get the summation results $T = \{1,2,3,4,5\}$ and the counter matrix $C_T$, which correctly record the number of subsets that have the same summation value. As an example, $C_T(2,3) = 2$, indicating that there are two 2-element subsets ($\{n_1=1,n_2=2\}, \{n_1=1,n_3=2\}$) that have the same summation value $V_T(3) = 3$.

\begin{figure}
\centering
\includegraphics[width=0.8\columnwidth]{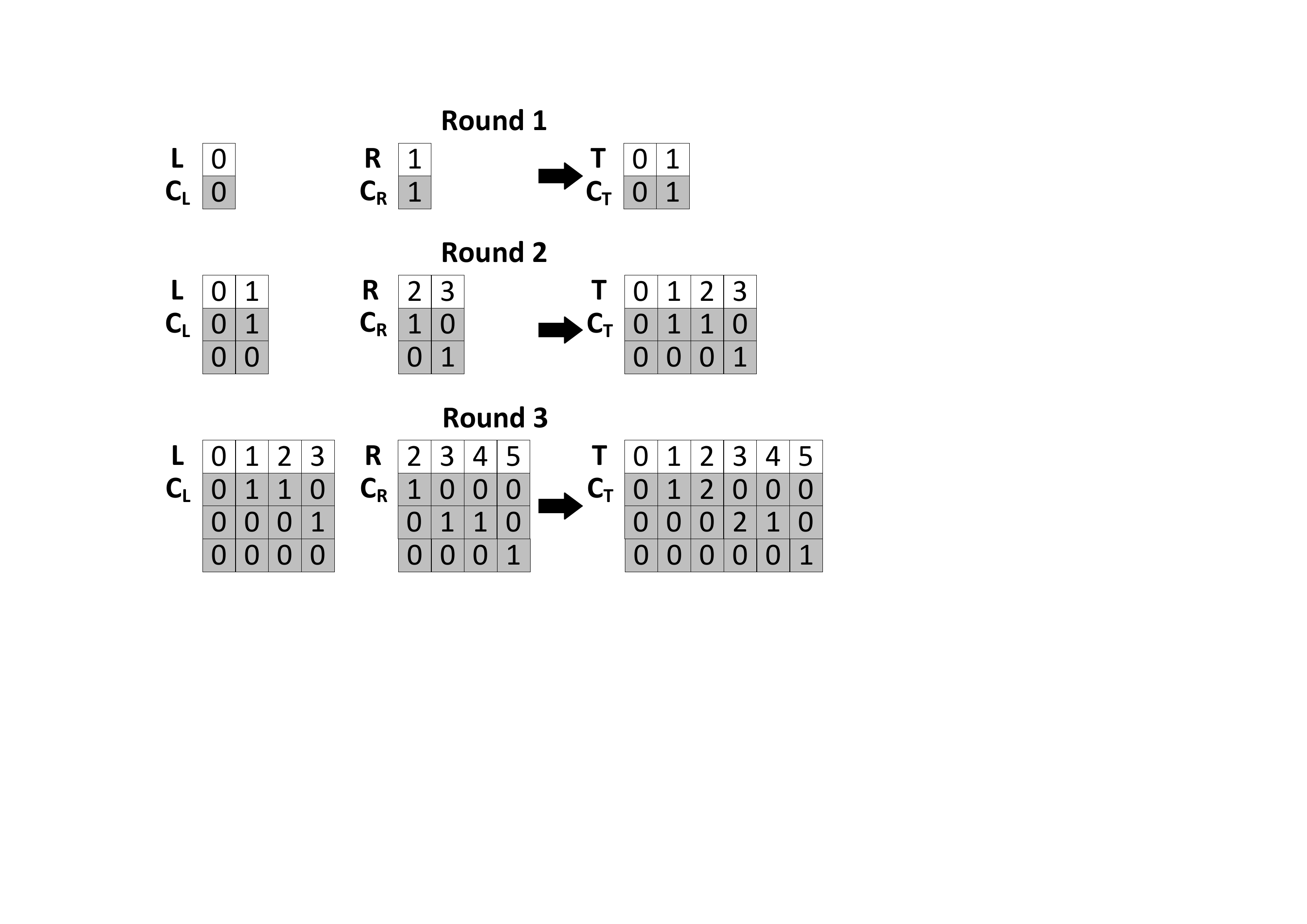}
\caption{The calculation of Algorithm~\ref{alg:calc} for $S=\{1,2,2\}$}
\label{fig:sumsubset}
\end{figure}

\subsubsection{OSA Scheme}
Based on the algorithms of the two stages, we can achieve the optimal storage allocation through Algorithm~\ref{alg:osa}. And it is straightforward to see:

\begin{thm}
The OSA scheme is a polynomial time algorithm.
\end{thm}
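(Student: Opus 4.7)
My plan is to decompose Algorithm~\ref{alg:osa} into its three constituent pieces and bound each one's running time separately, then sum the results. The OSA scheme invokes Stage~I (Algorithm~\ref{alg:part}) once to enumerate every valid allocation $S(n,N,l)$, then invokes Stage~II (Algorithm~\ref{alg:calc}) once on each enumerated allocation to obtain its failure probability, and finally sweeps the resulting list of probabilities to return the minimum. Hence the total running time satisfies
\begin{equation*}
T_{\mathrm{OSA}} \;=\; T_{\mathrm{I}}(n,N) \;+\; \mathbb{P}(n,N)\,T_{\mathrm{II}}(n,N) \;+\; \mathcal{O}\bigl(\mathbb{P}(n,N)\bigr),
\end{equation*}
so it suffices to show that each summand is polynomial in $n$ and $N$.

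For the Stage~II and selection terms I would simply invoke the earlier theorem giving $T_{\mathrm{II}}(n,N)=\mathcal{O}(nN)$, so these two terms together contribute $\mathcal{O}\bigl(nN\,\mathbb{P}(n,N)\bigr)$. For the Stage~I term I would exploit the dynamic-programming table of size $n\times N$ that Algorithm~\ref{alg:part} populates: each cell $(i,j)$ is computed from the already-filled cells $(i-1,j-1)$ and $(i-j,j)$ via the recurrence in Equation~(\ref{eq:part}), and the update involves at most $\mathbb{P}(i-1,j-1)+\mathbb{P}(i-j,j)\le\mathbb{P}(n,N)$ set-level operations, each costing $\mathcal{O}(N)$ work to form the union with $\{1\}$ or the elementwise shift defined in Equation~(\ref{eq:setplus}). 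This gives $T_{\mathrm{I}}(n,N)=\mathcal{O}\bigl(nN^{2}\,\mathbb{P}(n,N)\bigr)$.

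The main obstacle, and really the only nontrivial step, is therefore bounding $\mathbb{P}(n,N)$ itself; the rest of the argument is bookkeeping. I would use the classical bijection between partitions of $n$ into exactly $N$ positive parts and partitions of $n-N$ into at most $N$ parts, together with the elementary inequality $p(n-N,\le N)\le\binom{n-1}{N-1}$, to conclude $\mathbb{P}(n,N)=\mathcal{O}(n^{N-1})$. Since in the storage-allocation setting $N$ (the number of available data centers) is a fixed system parameter rather than an input growing with the problem size, this bound is genuinely polynomial. Substituting it back yields $T_{\mathrm{OSA}}=\mathcal{O}(n^{N+1}N^{2})$, which is polynomial and completes the proof.
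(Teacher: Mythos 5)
Your decomposition is correct, and it is considerably more substantive than what the paper offers: the paper asserts this theorem with only the phrase ``it is straightforward to see,'' relying implicitly on the earlier unproved claim that Stage~I is polynomial and on Theorem~2's $\mathcal{O}(nN)$ bound for Stage~II, but it never bounds the number of iterations of the final loop, i.e.\ the number of valid allocations $\mathbb{P}(n,N)$. You correctly identify that bound as the only nontrivial step. Your estimate is right: partitions of $n$ into exactly $N$ positive parts biject (by subtracting $1$ from each part) with partitions of $n-N$ into at most $N$ parts, and since every such partition is a composition of $n-N$ into $N$ nonnegative ordered parts, $\mathbb{P}(n,N)\le\binom{n-1}{N-1}=\mathcal{O}(n^{N-1})$; combined with your per-cell accounting for Stage~I and the $\mathcal{O}(nN)$ cost of each Stage~II call, the total is polynomial in $n$ for fixed $N$. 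The one point worth stressing --- and which you handle more honestly than the paper --- is that this polynomiality genuinely requires treating $N$ as a constant. Both $n$ and $N$ are listed as inputs to Algorithm~\ref{alg:osa}, and if $N$ is allowed to grow with $n$ (say $N\approx\sqrt{n}$) then $\mathbb{P}(n,N)$ approaches the unrestricted partition number $p(n)=e^{\Theta(\sqrt{n})}$, which is super-polynomial; indeed Stage~I alone could not then run in polynomial time, since its output has size $\Omega(\mathbb{P}(n,N))$. So the theorem is true only under the fixed-parameter reading you adopt, and your proof makes explicit the hypothesis the paper leaves unstated.
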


\begin{algorithm}                                
\begin{spacing}{1.0}   
\begin{algorithmic}               
    \Require the number of encoded parts $n$ and the number of storage centers $N$
    \Ensure the allocation with the lowest failure probability
    \Function{OSA}{$n,N$}
    \State  $S(n,N,l) \Leftarrow$ \Call{FindAllAllocations}{$n,N$} $(1 \leq l \leq$ \\ \hskip15pt $\mathbb{P}(n,N))$    
    \For{$l=1 \to \mathbb{P}(n,N)$}
        \State $P_S \Leftarrow$ \Call{CalculateProbability}{$S(n,N,l)$}
    \EndFor
    \State output the allocation with the lowest $P_S$
    \EndFunction
\end{algorithmic}
\end{spacing}
\caption{OSA scheme}          
\label{alg:osa}                  
\end{algorithm}

\subsection{Simulation Results for the OSA Scheme}

In this section we will show the performance of the OSA scheme for the given regenerating code with parameters $(n,k,d,\alpha,\beta,B)$ and number of data centers $N$. 

In Fig.~\ref{fig:simk} are the simulation results for $n=45$, $k=\{16,21,26,31\}$, $N=9$ and $p=0.1$. For performance comparison, we also plot the results for the even allocation as defined in equation~(\ref{eq:even}), where $\lfloor n/N \rfloor$ is the floor operation to get the largest integer that is less or equal to $n/N$, $mod(n,N)$ is the modulo operation to find the remainder of the division of $n$ by $N$. 
\begin{equation}
\label{eq:even}
n_i=\left\{
\begin{array}{cl}
\lfloor n/N \rfloor + 1, & 1 \leq i \leq mod(n,N)  \\ 
\lfloor n/N \rfloor, & mod(n,N) < i < N
\end{array}\right.
\end{equation}
The even allocation is a natural allocation scheme to store equal number of data blocks into each storage center. From the figure we can see that the failure probability of the OSA scheme is about half order of magnitude lower than the even allocation. Both of the probabilities will become higher when $k$ increases since there is less redundancy in the distributed cloud storage to recover the failed storage centers.

\begin{figure}
\centering
\includegraphics[width=1.0\columnwidth]{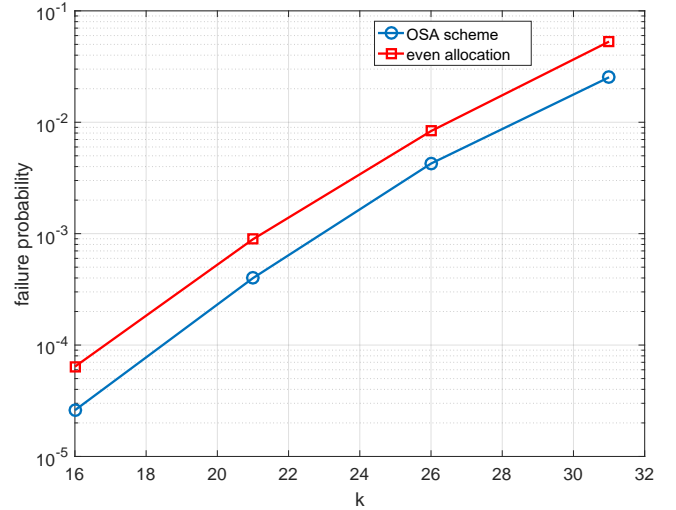}
\caption{Performance of the optimal storage allocation for different k}
\label{fig:simk}
\end{figure}

In Fig.~\ref{fig:simN} are the simulation results for $n=45$, $k=21$, $p=0.1$. In this simulation, we change the number of storage centers $N$ to study its impact to the failure probability. From the figure we can see that the failure probability will become lower when the number of storage centers increases. And the performance gap of the even allocation and the OSA scheme will diminish with the increasing of the number of storage centers.

\begin{figure}
\centering
\includegraphics[width=1.0\columnwidth]{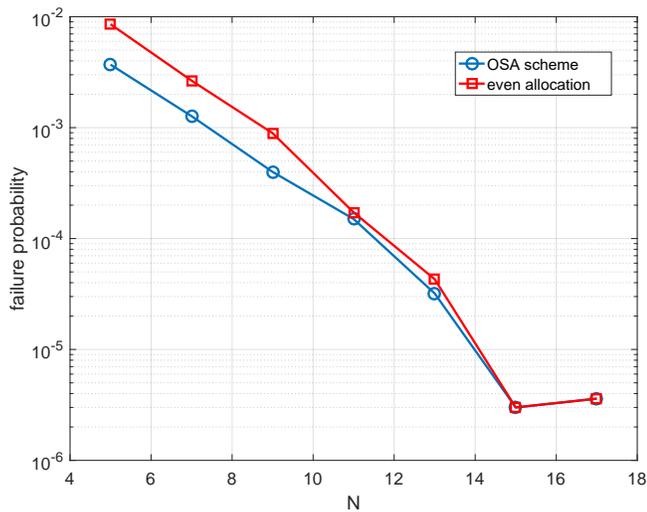}
\caption{Performance of the optimal storage allocation for different number of storage centers}
\label{fig:simN}
\end{figure}

\section{conclusion}\label{sec:conclusion}
In this paper we 
analyze the important applications of network coding in the IoT core network and the distributed cloud storage that stores the data generated by the IoT core network. We propose an adaptive network coding (ANC) scheme in the IoT core network with software defined wireless network (SDWN). Simulation results have demonstrated that the ANC scheme can achieve higher transmission efficiency than existing schemes. Then we introduce the optimal storage allocation problem for the distributed cloud storage that utilizes network coding. we propose an optimal storage allocation (OSA) scheme to solve the problem in polynomial time. We also conduct simulations to show that the OSA scheme can greatly improve the storage reliability. Impressed by the simplicity and efficacy of network coding in both communication and storage, we believe that more and more potential applications of network coding would be found and studied during the development of 
Internet of things to accelerate the whole deployment process.


%

%

%
%

\bibliography{bibdata}
\bibliographystyle{ieeetr}

%

%




\end{document}